\documentclass[final, 11pt]{article}

\usepackage{ltexpprt} \usepackage{amssymb} \usepackage{amsmath} \usepackage{appendix}

\begin{document}

\title{Packet Scheduling in a Size-Bounded Buffer}
% \thanks{Research is partially supported by NSF grant CCF-0915681.}}

\author{Fei Li~\thanks{Department of Computer Science, George Mason University, Fairfax, VA 22030, USA. {\tt lifei@cs.gmu.edu}}}

\maketitle

%-------------------------------------------------------------------------

\begin{abstract}
We consider algorithms to schedule packets with values and deadlines in a size-bounded buffer. At any time, the buffer can store at most $B$ packets. Packets arrive over time. Each packet has a non-negative value and an integer deadline. In each time step, at most one packet can be sent. Packets can be dropped at any time before they are sent. The objective is to maximize the total value gained by delivering packets no later than their respective deadlines. This model generalizes the well-studied {\em bounded-delay model} (Hajek. CISS 2001. Kesselman et al. STOC 2001). We first provide an optimal offline algorithm for this model. Then we present an alternative proof of the $2$-competitive deterministic online algorithm (Fung. arXiv July 2009). We also prove that the lower bound of competitive ratio of a family of (deterministic and randomized) algorithms is $2 - 1 / B$.
\end{abstract}

%-------------------------------------------------------------------------

\section{Introduction}

In this paper, we study a model called the {\em bounded buffer model} for Quality-of-Service (QoS) buffer management. Time is discrete. Packets arrive over time. Each packet $p$ has a non-negative value $v_p \in \mathbb R^+$ and an integer deadline $d_p \in \mathbb Z^+$. The deadline $d_p$ specifies the time by which $p$ should be sent. There is a buffer with a limited size of $B \in \mathbb Z^+$. At any time, the buffer can store no more than $B$ packets. Packets already existing in the buffers can be dropped at any time before they are served. A dropped packet cannot be delivered any more. In each time step, at most one packet from the buffer can be sent. The objective is to maximize {\em weighted throughput}, defined as the total value of the transmitted packets by their respective deadlines.

We design both offline and online algorithms for this model.  We use competitive ratio to measure an online algorithm's performance versus an optimal clairvoyant algorithm who knows the whole input in advance. A deterministic (randomized) online algorithm is called {\em $k$-competitive} if its (expected) weighted throughput on {\em any} finite instance is at least $1 / k$ of the weighted throughput of an optimal offline algorithm on this instance. $k$ is known as the online algorithm's {\em competitive ratio}~\cite{BY98}.

%------------------------------------------------------------------------------

\subsection{Related work.}

Since the {\em bounded-delay model} for QoS buffer management was introduced in~\cite{KLMPSS04, H01}, many researchers have considered this model as well as its variants~\cite{KLMPSS04, H01, CF03, CCFJST06, CJST07, LSS05, LSS07, EW07}. Most research results assume that the buffer has sufficient space to accommodate all released packets before they are delivered or they expire. Instead, we consider a more practical model in this paper. In our model, the buffer has a finite size $B \in \mathbb Z^+$ such that at any time, no more than $B$ packets can be concurrently staying in the buffer. The bounded buffer model generalizes the bounded-delay model, given that we allow the buffer size $B$ to be larger than any packet's {\em slack time}. (A packet's slack time is defined as the difference between its deadline and release time.)

For the bounded-delay model, an optimal offline algorithm has been proposed in~\cite{KLMPSS04}, running in $O(n \log n)$ time where $n$ is the number of packets released. For online algorithms, the best known lower bound of competitive ratio of deterministic algorithms is $\phi = (1 + \sqrt{5}) / 2 \approx 1.618$~\cite{H01, CF03, AMZ03}; this lower-bound also applies to instances in which the deadlines of the packets (weakly) increase with their release dates. A simple greedy algorithm that always schedules the maximum-value packet in the buffer is $2$-competitive~\cite{H01, KLMPSS04}.  A generalization of the greedy algorithm, which always schedules the earliest packet with a value at least $1 / \alpha$ ($\alpha \ge 1$) times of the maximum-value of a packet~\cite{CCFJST06}, has a competitive ratio of asymptotically $2$.  Chrobak et al.~\cite{CJST07} discuss a modification with one status bit that results in an algorithm with a competitive ratio of $64 / 33 \approx 1.939$. For a variant in which the deadlines of the packets (weakly) increase with their release dates, Li et al.~\cite{LSS05} propose an optimal deterministic $\phi$-competitive algorithm. Using the same analysis, but in a more complicated way, Li et al. provide a ($3 / \phi \approx 1.854$)-competitive deterministic algorithm~\cite{LSS05} for the general model. Independently, Englert and Westermann present a $1.894$-competitive deterministic memoryless algorithm and a ($2 \sqrt{2} - 1 \approx 1.828$)-competitive deterministic algorithm~\cite{EW07}. Closing the gap $[1.618, \ 1.828]$ of competitive ratio for deterministic algorithms is a difficult open problem. A randomized online algorithm with a competitive ratio of $e / (e - 1) \approx 1.582$ is proposed in~\cite{CCFJST06}. The lower bound of competitive ratio of randomized algorithms is $1.25$. How to tighten the gap $[1.25, \ 1.582]$ in the randomized bounded-delay model remains open. Recently, an algorithm achieving a competitive ratio of $4 / 3 \approx 1.33$ against an oblivious adversary has been proposed in~\cite{JLSS09} for instances in which packet deadlines weakly increase with their release time.

The bounded buffer model is studied by Li~\cite{L09}. Its generalization, called the {\em multi-buffer model}, is considered by Azar and Levy~\cite{AL06}. In~\cite{L09}, a $3$-competitive deterministic algorithm and a ($\phi^2 \approx 2.618$)-competitive randomized algorithms are given. Fung~\cite{F09} provides a $2$-competitive deterministic algorithm and in this paper, we present an alternative proof. Azar and Levy~\cite{AL06} provide a $9.82$-competitive deterministic algorithm, which also works for the multi-buffer model. For the multi-buffer model, Li~\cite{L09b} improves the competitive ratio to $3 + \sqrt{3} \approx 4.723$.

%------------------------------------------------------------------------------

\subsection{Our contributions.}

The paper is organized as follows. In Section~\ref{sec:offline}, we present an optimal offline algorithm for the bounded buffer model. Then in Section~\ref{sec:online}, we provide an alternative proof of the $2$-competitive algorithm given by Fung, as well as the lower bound $2 - 1 / B$ for a broad family of online algorithms.

%------------------------------------------------------------------------------

\section{An Offline Algorithm}
\label{sec:offline}

We define a canonical order of delivering packets: If two packets are both in the buffer and are to be delivered by the specified algorithm, the one with an earlier deadline is sent. We call this order EDF (Earliest-Deadline-First).

\begin{theorem}
For the bounded buffer model, there exists an optimal offline algorithm running in $O(n^2)$ time, where $n$ is the number of packets released.
\end{theorem}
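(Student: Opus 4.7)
The plan is to exhibit a greedy-by-value algorithm with an EDF-based feasibility test, and to argue optimality via an exchange argument. Sort the $n$ packets in non-increasing order of value and process them one at a time, maintaining a tentative set $S$ of accepted packets together with their EDF transmission times. For each candidate packet $p$, test whether $S \cup \{p\}$ admits a feasible EDF schedule -- that is, the deadline-ordered simulation assigns every packet to a distinct slot within its window and, at every time step $\tau$, the buffer holds at most $B$ packets. If the test succeeds, set $S \leftarrow S \cup \{p\}$; otherwise discard $p$. The output is $S$ with its EDF schedule.

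First I would verify that EDF is a sound feasibility oracle: any packet set that admits some feasible schedule also admits the EDF one. The usual bubble-sort argument resolves slot conflicts, and, crucially, swapping two out-of-EDF-order packets never increases buffer occupancy at any time step, since moving the earlier-deadline packet forward can only shorten its buffer residence. Thus a single $O(n)$-time simulation -- sort $S \cup \{p\}$ by deadline, sweep the time axis, and track buffer contents -- correctly decides feasibility. With an $O(n \log n)$ initial sort and $n$ tests of cost $O(n)$ each, the total running time is $O(n^2)$.

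Correctness follows by exchange against an optimal feasible set $O$. Traverse the packets in decreasing value order and, at each step, rewrite $O$ so that it agrees with the greedy's decision without loss. When the greedy rejects a packet $p$, there must exist blockers in the then-current $S$ -- each of value $\ge v_p$ -- witnessing either a slot conflict or a buffer overflow over some subinterval of $[r_p, d_p]$; these witnesses allow $p$ to be removed from $O$ without decreasing $v(O)$. When the greedy accepts $p$ but $O$ does not, $p$ can be spliced into $O$ at the cost of evicting a packet of value at most $v_p$. After all packets are processed, $O$ has been transformed into $S$, which proves $v(S) \ge v(O)$.

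The main obstacle is the exchange step in the presence of the buffer constraint. In the pure bounded-delay model a blocker occupies a single slot and the swap is local. Here a blocker can be any packet whose active interval $[r_q, t_q]$ overlaps a saturated buffer window inside $[r_p, d_p]$, so the exchange must be framed in terms of buffer-occupancy intervals rather than isolated slots, and one must check that relocating or removing a blocker does not create new deadline or buffer violations elsewhere in the schedule. Making this bookkeeping precise -- ideally by showing that the feasible subsets form a well-behaved structure under swap, so that the weighted greedy is optimal -- is the technical heart of the argument.
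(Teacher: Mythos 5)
Your algorithm is exactly the paper's: sort by non-increasing value, greedily accept each packet whose addition keeps the set EDF-feasible, with an $O(n)$ EDF simulation per test and an $O(n\log n)$ presort, giving $O(n^2)$ overall. The running-time analysis and the EDF-soundness step are fine (in the adjacent-swap argument the buffer occupancy at every instant is in fact \emph{exactly preserved}, since the two swapped packets are both already released at the earlier of the two send times, so the multiset of send times among released packets is unchanged --- slightly cleaner than ``can only shorten its buffer residence''). Where you and the paper diverge is only in how the optimality of greedy-by-value is justified, and here your honesty works against you: you explicitly flag the exchange step under the buffer constraint as the unresolved ``technical heart,'' whereas the paper disposes of it in one clause by asserting that the feasible (deliverable) packet sets form a matroid. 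These are the same missing piece. The matroid exchange axiom --- if $S_1, S_2$ are both deliverable and $|S_1|<|S_2|$, then some $p\in S_2\setminus S_1$ keeps $S_1\cup\{p\}$ deliverable --- is precisely the statement that makes your value-ordered exchange go through without the delicate blocker bookkeeping you describe, because for a matroid the greedy algorithm on any nonnegative weighting is optimal by a standard theorem. So to complete your proof you should prove that single exchange property once (e.g., by comparing the EDF schedules of $S_1$ and $S_2$ and locating a time at which $S_2$'s schedule is ``ahead,'' then showing the corresponding packet can be inserted into $S_1$'s EDF schedule without violating a deadline or pushing any instant's occupancy above $B$), rather than reasoning about saturated buffer windows packet by packet inside the value-ordered sweep. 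As written, your proposal is not wrong but is incomplete at exactly the point the paper leaves unproved; neither text contains a self-contained optimality argument.
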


\begin{proof}
Fix an input sequence $\mathcal I$. We sort all packets in $\mathcal I$ in non-increasing value order, with ties broken in favor of the one with a later deadline. We start from a set of packets $S = \emptyset$. For each packet $j \in ({\mathcal I} \setminus S)$, we pick up $j$ in order and run EDF to examine whether all packets in $S \cup \{j\}$ can be delivered successfully by their respective deadlines. If ``yes'', we update $S$ with $S \cup \{j\}$. For each examined packet $j$, no matter whether we insert $j$ into $S$ or not, we drop it out of $\mathcal I$. We examine all packets in $\mathcal I$ in order till $\mathcal I$ gets empty. We claim that the schedule on $S$ we finally have is optimal, based on the matroid property of this model.

Let $|{\mathcal I}| = n$. Sorting packets in $\mathcal I$ takes $O(n \log n)$ time. The buffer has at most $B$ packets at any time, thus, each packet insertion (in increasing deadline order) takes $O(\log B)$ time. Running EDF over a set of packets $S \cup \{j\}$ takes time $|S| + 1 + \log B \le n + \log B$. Thus, the total running time of the algorithm is $O(n \log n + n (n + \log B)) = O(n^2 + n \log B)$. Note $B < n$ (otherwise, this model is the bounded-delay model and has a running time of $O(n \log n)$~\cite{KLMPSS04}), thus, our algorithm has a running time of $O(n^2)$. $\Box$
\end{proof}

\begin{corollary}
For the variant in which all packets are with the same value, an online algorithm EDF is optimal with a running time of $O(n \log B)$.
\end{corollary}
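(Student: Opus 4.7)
The plan is to prove two things: that the natural online EDF algorithm matches the optimal offline throughput when all values are equal, and that it can be implemented in $O(n \log B)$ time. I would first fix the algorithm's admission policy, since EDF on its own only specifies transmission: when a packet arrives and the buffer has space, admit it; when the buffer is full, admit the new packet and evict the resident packet with the largest deadline only if the new packet has a smaller deadline, otherwise drop the arrival. Transmission at each time step is the buffer packet with the smallest deadline.

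The core argument is a stepwise exchange. Fix any optimal offline schedule $\mathrm{OPT}$ and consider the first time $t$ at which $\mathrm{OPT}$'s behavior diverges from EDF, either in which packet it sends or in which packet it keeps in its buffer. Because all values are equal, any swap that preserves feasibility preserves the objective. If EDF sends $q$ and $\mathrm{OPT}$ sends $p \neq q$ at time $t$ with $d_q \le d_p$, then $q$ is available in $\mathrm{OPT}$'s buffer as well (induction up to $t$) and I would exchange the roles of $p$ and $q$: $q$ is transmitted at $t$, and $p$ takes the slot where $q$ was later transmitted by $\mathrm{OPT}$ (or is dropped, which is still fine for throughput). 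The admission case is analogous: if EDF evicts a packet with a larger deadline than $\mathrm{OPT}$ would, I substitute in $\mathrm{OPT}$'s schedule and use the fact that the earlier-deadline packet is at least as restrictive, so feasibility is preserved. Iterating, $\mathrm{OPT}$ is transformed into EDF without losing any packet, which establishes optimality.

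For the running time, I would maintain the buffer as a balanced BST (or min-max heap) of at most $B$ packets keyed by deadline. Each of the $n$ arrivals triggers at most one insertion and one deletion (either an eviction or the discard of the arrival itself), each costing $O(\log B)$. Each transmission is a find-min followed by a delete, again $O(\log B)$, and there are at most $n$ transmissions. Summing gives $O(n \log B)$.

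The main obstacle is making the exchange argument airtight through the interaction of admission and transmission decisions over the whole timeline: a swap at time $t$ may invalidate later steps if a packet that $\mathrm{OPT}$ relied on is no longer in its buffer. I would handle this by proving, as an invariant maintained through the induction, that after the swap $\mathrm{OPT}$'s buffer contents are ``pointwise no worse'' than before in the sense that every deadline present at each future time is replaced by one no larger; since feasibility with respect to deadlines and buffer capacity is monotone in this order, the modified schedule remains valid and delivers the same number of packets.
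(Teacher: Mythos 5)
Your transmission rule and the running-time accounting are fine, but the admission policy you fixed is not optimal, so the exchange argument cannot be made airtight as stated. Concretely, take $B=2$ and let three unit-value packets arrive at time $1$ in the order $(1,1),\ (1,1),\ (1,3)$ (written as (value, deadline)). Your policy admits the first two because the buffer has space, and then drops the arrival $(1,3)$ because its deadline is not smaller than the largest resident deadline. You deliver one packet (the second $(1,1)$ expires unsent), while the optimum keeps $\{(1,1),(1,3)\}$ and delivers two. The root of the problem is that ``keep the $B$ earliest deadlines'' is the wrong greedy: a set of pending packets is deliverable at time $t$ only if, sorted by deadline, the $i$-th packet has deadline at least $t+i-1$, and retaining early-deadline packets that violate this condition crowds out deliverable later-deadline ones. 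This is also exactly where your closing invariant is oriented backwards: feasibility of a schedule is monotone under replacing deadlines by \emph{larger} ones, not smaller ones, so ``every deadline present is replaced by one no larger'' does not preserve feasibility of the remainder of the schedule.

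What the corollary needs (and what the paper's one-line proof implicitly leans on, via the EDF feasibility test used in the preceding theorem) is EDF with a \emph{feasibility-based} admission rule: maintain a maximum-cardinality set of pending packets that EDF can deliver within the buffer bound, i.e., admit an arrival iff adding it keeps the EDF order feasible (including the capacity constraint), and otherwise drop one packet whose removal restores feasibility; for unit values any such choice preserves maximum cardinality. With that rule your exchange argument goes through, with the invariant stated in the correct direction (after a swap, each retained deadline is no \emph{smaller} than the one it replaces). The $O(n\log B)$ bound also survives, although detecting infeasibility on insertion requires a slightly augmented balanced tree over the at most $B$ buffered packets (e.g., maintaining $\min_i (d_i - i)$ over subtrees), not just find-min and find-max.
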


\begin{proof}
If all packets are with the same value, our objective is to maximize the number of packets delivered by their deadlines. Thus, we simply send packets using the policy EDF. For each packet, inserting it in the buffer (in increasing deadline order) takes $O(\log B)$ time. Thus, the total running time of EDF is $O(n + n \log B) = O(n \log B)$. $\Box$
\end{proof}

%------------------------------------------------------------------------------

\section{Online Algorithms}
\label{sec:online}

At first, we introduce a few concepts. Then we prove the lower bound $2 - 1 / B$ of competitive ratio for a broad family of deterministic and randomized online algorithms. At last, we present a deterministic $2$-competitive online algorithm and its analysis.

\begin{Definition}
{\bf Provisional schedule~\cite{CJST07a, EW07}}. At any time $t$, a {\em provisional schedule} is a schedule for the pending packets at time $t$ (assuming no future arriving packets). This schedule specifies the set of packets to be transmitted, and for each it specifies the delivery time.
\end{Definition}

\begin{Definition}
{\bf Optimal provisional schedule~\cite{CJST07a, EW07}}. Given a set of pending packets, an {\em optimal provisional schedule} is the one achieving the maximum total value of packets among all provisional schedules.
\end{Definition}

We use ${\bf S}_t$ to denote both the provisional schedule for time steps $[t, \ +\infty)$ and the set of packets delivered successfully in the schedule. All known online algorithms for the bounded-delay model~\cite{LSS05, CJST07, LSS07, EW07} calculate the optimal provisional schedules at the beginning of each time step. These algorithms differ only by the packets they select to send. The online algorithms in such a broad family are defined as the {\em best-effort admission algorithms}.

\begin{Definition}
{\bf Best-effort admission algorithm}. Consider an online algorithm ON and a set of pending packets ${\bf P}_t$ at any time $t$. If ON calculates the optimal provisional schedule ${\bf S}_t$ on ${\bf P}_t$ and selects one packet from ${\bf S}_t$ to send in each step $t$, we call ON a {\em best-effort admission algorithm}.
\end{Definition}

%------------------------------------------------------------------------------

\subsection{The lower bound of competitive ratio.}

In this section, we create an instance to prove that the lower bound of competitive ratio for all best-effort admission algorithms is $2 - 1 / B$. Note that for the bounded-delay model, the buffer size $B$ is implicitly specified by $+\infty$ and the lower bound of competitive ratio is $\phi$~\cite{AMRR05, H01, CF03}. This lower bound holds for the best-effort admission algorithms as well. Here, we improve the lower bound from $\phi$ to $\max\{\phi, \ 2 - 1 / B\}$ for the bounded buffer model in which the buffer size is restricted by $B$.

\begin{theorem}
For the bounded buffer model, the lower bound of competitive ratio for the best-effort admission algorithms is $\max\{\phi, \ 2 - 1 / B\}$, where $B$ is the buffer size.
\label{theorem:lowerbound}
\end{theorem}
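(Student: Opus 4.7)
The plan is to prove the two lower bounds $\phi$ and $2 - 1/B$ separately and then take their maximum. Since a best-effort admission algorithm is a special kind of online algorithm, any lower bound for general online algorithms whose adversarial instance keeps only $O(1)$ pending packets at any time transfers immediately to our setting for every $B \geq 2$. I would first invoke the classical bounded-delay constructions (Hajek; Chin--Fung; Andelman--Mansour--Zhu) verbatim to obtain the $\phi$ bound, after briefly verifying that in each of those instances the number of simultaneously pending packets never exceeds $B$, so the instance is valid in the size-$B$ buffer model and the best-effort algorithm is forced to replicate its behavior in the bounded-delay setting.

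The $2 - 1/B$ bound requires a dedicated construction that exploits the buffer capacity. The concrete plan is to build an adaptive instance in which, at each of several time steps, the algorithm must commit to transmitting one packet from its optimal provisional schedule, after which the adversary releases arrivals that retroactively reveal that commitment to be suboptimal. A natural first move is to release a batch of $B$ packets of value $1$ at time $1$ sharing a common deadline $B$, so that the optimal provisional schedule contains all of them and the algorithm must transmit one at time $1$. Depending on which packet is chosen, the adversary then issues further packets (with slightly larger values and/or later deadlines) chosen so that OPT, using foresight, can fit roughly $2B-1$ unit-value deliveries into the horizon while any best-effort algorithm, whose buffer contents are determined by the optimal provisional schedule on the packets it has seen so far, can fit only roughly $B$. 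Tracking the two throughputs yields the ratio $(2B-1)/B = 2 - 1/B$, and combining with the $\phi$ bound delivers $\max\{\phi, 2 - 1/B\}$.

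The main obstacle is that a best-effort admission algorithm is free to break ties in its optimal provisional schedule arbitrarily, so the lower bound must hold uniformly over all such tie-breakings. I would address this by introducing small $\varepsilon$-perturbations into the packet values so that every intermediate optimal provisional schedule is unique, thereby pinning down which packet the algorithm is forced to send; and by making the adversary's future releases depend adaptively on the algorithm's previous choice. The most delicate bookkeeping is verifying, by induction on the round, that after each adversarial follow-up the optimal provisional schedule remains exactly the one the adversary wants, so that the algorithm cannot recover value by deviating. Once the per-round accounting is done, both ON's throughput of $\Theta(B)$ and OPT's throughput of $\Theta(2B-1)$ fall out cleanly, giving the claimed bound.
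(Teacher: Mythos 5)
Your handling of the $\phi$ part matches the paper (which simply observes that the classical bounded-delay instances carry over), and your target accounting for the other part --- OPT delivering about $2B-1$ near-unit-value packets versus ON delivering about $B$ --- is the right numerology. But the core of the $2-1/B$ argument, namely the actual adversarial instance and the mechanism that forces a best-effort admission algorithm to lose, is missing, and the mechanism you do sketch points in the wrong direction. You propose to punish the algorithm's choice of \emph{which packet to transmit}, then adaptively release packets that make that commitment look bad; that is the engine of the $\phi$ lower bound, not of this one. In your opening batch all $B$ packets are identical (value $1$, common deadline $B$), so the transmission choice carries no information and there is nothing for the adversary to adapt to. The lever that actually yields $2-1/B$ is the \emph{retention} decision forced by the size bound: release, in a single step, $B$ cheap packets with deadlines far beyond $B$ together with $B$ packets of value $1+\epsilon$ and deadlines $1,\dots,B$. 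The unique optimal provisional schedule on these $2B$ pending packets consists of exactly the $B$ packets of value $1+\epsilon$, so a best-effort algorithm must evict every far-deadline packet to make room; OPT instead keeps $B-1$ of the cheap far-deadline packets and only one $(1+\epsilon)$-packet. The adversary then drip-feeds one $(1+\epsilon,i)$ packet per step $i=2,\dots,B$, which OPT absorbs into its free slot while ON, whose buffer is saturated with packets all expiring by time $B$, gains nothing further; after step $B$ OPT still delivers its $B-1$ retained cheap packets. This instance is oblivious (no adaptivity and no tie-breaking case analysis is needed, since the $\epsilon$-perturbation already makes the provisional schedule unique), which also matters because an adaptive adversary would weaken the claim for the randomized members of this family.

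A secondary point: your first batch (all value $1$, all deadline $B$) cannot set up the required tension even after repair, because the trap needs two packet classes released simultaneously --- cheap with slack exceeding $B$, and slightly dearer with small slack --- so that the buffer bound, not the deadlines, is what forces the best-effort algorithm to discard the packets OPT will later monetize. Without that two-class, over-capacity release in a single step, the optimal provisional schedule never disagrees with what OPT retains, and no gap beyond the bounded-delay $\phi$ bound emerges.
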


\begin{proof}
In the following instance, we will show: {\em If the buffer size is bounded, the packets that the optimal offline algorithm chooses to send may not be from the optimal provisional schedule calculated by the online algorithm, even if both algorithms have the same set of pending packets}. This property does not hold in the bounded-delay model; and it leads that any deterministic best-effort admission algorithm cannot achieve a competitive ratio better than $2$.

Assume the buffer size is $B$. Let a best-effort admission online algorithm be ON. We use $(v_p, \ d_p)$ to represent a packet $p$ with a value $v_p$ and a deadline $d_p$. Initially, the buffer is empty.  A set of packets, from which the optimal offline algorithm will accept $b - 1$ packets from them and eventually send, are released: $(1, \ B + 1), \ (1, \ B + 2), \ \ldots, \ (1, \ B + B)$. Notice that all packets released have deadlines larger than the buffer size $B$. The optimal offline algorithm drops $(1, \ B + 1)$, and keeps $(1, \ B + 2), \ \ldots, \ (1, \ B + B)$ in its buffer. In the same time step, $B$ packets $(1 + \epsilon, \ 1), \ (1 + \epsilon, \ 2), \ \ldots, \ (1 + \epsilon, \ B)$ are released afterwards. There are no more new packets arriving in this step. The optimal offline algorithm only accepts $(1 + \epsilon, \ 1)$. Thus, after processing arrivals in step $1$, the optimal offline algorithm send the packet $(1 + \epsilon, \ 1)$. Instead, ON calculates the optimal provisional schedule in step $1$ which includes all these newly arriving packets with value $1 + \epsilon$. All such packets will be accepted by ON, but the packets $(1, \ B + i)$, $\forall i = 1, \ 2, \ \ldots, \ B$, will be dropped due to the buffer size constraint. ON sends a packet with value $1 + \epsilon$ in the first step.

At the beginning of each step $i = 2, \ 3, \ \ldots, \ B$, only one packet $(1 + \epsilon, \ i)$ is released. At the end of step $B$, no new packets will be released in the future. Since the time after the first step, all packets available to ON have their deadlines $\le B$. Thus, ON cannot schedule sending packets with a total value $\ge (1 + \epsilon)  (B - 1)$ in the time steps $2, \ 3, \ \ldots, \ B$. Since there is one empty buffer slot at the beginning of each time step $i = 2, \ 3, \ \ldots, \ B$, the optimal offline algorithm can accept and send all newly released packets $(1 + \epsilon, \ i)$ in steps $i = 2, \ 3, \ \ldots, \ B$. At the end of step $B$, the packets $(1, \ B + 2), \ (1, \ B + 3), \ \ldots, \ (1, \ B + B)$ are still remained in the optimal offline algorithm's buffer (they are not in ON's buffer though). Since there is no future arrivals, these $b - 1$ packets will be transmitted eventually by the optimal algorithm in the following $b - 1$ steps. The total value of ON achieves is $(1 + \epsilon) B$ while the optimal offline algorithm gets a total value $(1 + \epsilon)  B + (B - 1)$. The competitive ratio for this instance is
\begin{equation}
c = \frac{(1 + \epsilon) B + (B - 1)}{(1 + \epsilon) B} = 2 - \frac{1 + B \cdot \epsilon}{b + B \cdot \epsilon} \ge 2 - \frac{2}{B}, \ \ \ \mbox{if } \epsilon \cdot B = 1 \mbox{ and } B \ge 2.
\label{eq:change}
\end{equation}

If $B$ is large, ON cannot perform asymptotically better than $2$-competitive. This lose is due to ON calculating an optimal provisional schedule to find out the packet to send in each time step. Theorem~\ref{theorem:lowerbound} is proved. $\Box$
\end{proof}

%------------------------------------------------------------------------------

\subsection{A deterministic algorithm GRQ and its analysis.}
\label{sec:me}

The algorithm GRQ~\cite{F09} works as follows at each time $t$.

\begin{enumerate}
\item Align packets in the buffer slots in non-increasing value order. If a packet $j$ cannot be scheduled in the time slot $t + S[i] - 1$, where $S[i]$ is the first available buffer slot, drop $j$.

\item Send the first packet.
\end{enumerate}

\begin{theorem}
For the bounded buffer model, GRQ is $2$-competitive.
\label{theorem:sgd}
\end{theorem}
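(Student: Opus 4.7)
The plan is to prove $2$-competitiveness by a charging argument. Fix an input instance together with an optimal offline schedule OPT; I will distribute the value of each OPT-transmitted packet onto the packets GRQ transmits so that no GRQ packet receives more than twice its own value.

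First, I would record a structural property: at the start of every step, GRQ's buffer is an optimal provisional schedule on its pending packets. This is exactly the greedy-by-value matroid computation used in the offline algorithm of Theorem~1 --- considering packets in non-increasing value order and admitting each into the earliest deadline-feasible free slot produces a maximum-weight feasible set. Hence GRQ is a best-effort admission algorithm, and the packet dispatched at step $t$ is the highest-value packet currently in its buffer.

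Next, I would classify each OPT-transmitted packet $o$, sent at time $t_o$, according to its GRQ status. If $o$ lies in GRQ's buffer just before step $t_o$, then GRQ dispatches a packet of value $\ge v_o$ at that very step, and I charge $v_o$ to it (type A). Otherwise, either $o$ was already dispatched by GRQ at some earlier step, in which case I charge $v_o$ to that dispatch (type B), or $o$ was dropped by GRQ during some greedy alignment at an earlier step $t^{*} \le t_o$ (type C). In the type-C case, I would trace the matroid swap: $o$ was displaced by a packet $p_1$ with $v_{p_1} \ge v_o$; if $p_1$ is subsequently displaced by $p_2$, continue along $p_2, p_3, \ldots$ until reaching a packet $g$ in GRQ's final transmitted set, and charge $v_o$ to $g$. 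Since values are non-decreasing along the chain and the packet set is finite, the chain terminates well.

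To conclude, I would bound the total charge on each GRQ-transmitted packet $g$ by $2 v_g$. Each $g$ absorbs at most one direct (type A or B) charge of value $\le v_g$, because only one OPT packet is sent per time step. The delicate part, and the main obstacle, is to show that at most one type-C chain terminates at $g$, contributing another $\le v_g$. This will require exploiting the matroid-exchange structure of GRQ's provisional schedules together with the value-monotonicity of chains to prove that each surviving GRQ packet carries a unique ``bumping responsibility''. Once this uniqueness is established, summing the charges over all OPT-transmitted packets yields $v(\text{OPT}) \le 2\,v(\text{GRQ})$, completing the proof.
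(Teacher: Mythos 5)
Your charging scheme does not close at the factor $2$, and the failure is arithmetic before it is structural. The pivotal claim --- that each GRQ-transmitted packet $g$ absorbs at most one direct (type A or type B) charge --- is not justified by ``only one OPT packet is sent per time step,'' and it is false. A type-A charge to $g$ comes from the OPT packet transmitted at the step in which GRQ dispatches $g$; a type-B charge to $g$ comes from $g$ itself, when OPT transmits $g$ at some \emph{later} step. These are distinct OPT transmissions at distinct steps, so both can occur: if GRQ sends $g$ at step $s$ while OPT sends some buffered $o'$ with $v_{o'}\le v_g$ at step $s$, and OPT sends $g$ itself after step $s$, then $g$ already carries $v_{o'}+v_g$, which can approach $2v_g$. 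This is precisely the accounting of the classical greedy analysis in the bounded-delay model, and it consumes the entire factor of $2$. Your type-C chains must then deposit their charge on top of this, so even granting the uniqueness you defer (one terminating chain per survivor), the scheme yields only $3v_g$ per packet, i.e.\ $3$-competitiveness --- the previously known bound that this theorem is meant to improve. The deferred uniqueness claim is also genuinely delicate (a packet can be dropped by the alignment step without being displaced by any single arriving packet, e.g.\ when a higher-value arrival occupies an early slot and pushes a tight-deadline packet into an infeasible later slot, and a never-displaced packet can still expire unsent because GRQ dispatches by value rather than by deadline), but it is not the first thing that breaks.

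For comparison, the paper sidesteps this by a potential-function argument in the style of Kimbrel's FIFO proof: it tracks $\Phi = 2 \sum_{j \in S[i],\, i \le z(t)} v_j + \sum_{j \in S[i],\, i > z(t)} v_j - \sum_{k \in X_t} v_k$, where $X_t$ is the set of OPT's buffered packets due by time $t + |Q^{\mathrm{GRQ}}_t| - 1$ and $z(t)=|X_t|$, and verifies $2V_t - W_t + \Delta\Phi \ge 0$ separately at every arrival and delivery event. The asymmetric weighting of GRQ's buffer --- factor $2$ on the first $z(t)$ slots, factor $1$ beyond --- is exactly the device that pre-pays for the size-induced drops your third charge type cannot afford. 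To rescue a charging proof you would need a finer rule, e.g.\ fractional charges or a lemma showing that a packet receiving a type-C charge cannot simultaneously receive both a type-A and a type-B charge; neither follows from what you have written.
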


\begin{proof}
We prove Theorem~\ref{theorem:sgd} using a potential function method. Our proof is motivated by Kimbrel's simple proof~\cite{K04} for the $2$-competitive greedy algorithm for the FIFO buffer model.

We give some notation. Fix an input sequence of arriving packets. This input sequence can be regarded as a sequence of {\em packet arrival events} and {\em packet delivery events}. Then, in our analysis, if not mentioning, we use the subscript $t$ to denote an event, instead of a time step $t$. A single time step may involve more than one arrival events but only one delivery event. We use $Q^\text{ALG}_t$ to denote the algorithm ALG's buffer at time $t$. The buffer's slots are denoted as $S[1], \ S[2], \ \ldots, \ S[B]$. We use $j \in S[i]$ to denote that a packet $j$ is in the buffer slot $S[i]$ in GRQ's buffer. For GRQ, each buffer slot $S[i]$ corresponds to a time step GRQ sends a packet at time $t + S[i] - 1$ in the current provisional schedule.

Without loss of generality, we assume OPT only stores the packets it sends. We compare $Q^\text{GRQ}_t$ and $Q^\text{OPT}_t$. Define $X_t$ as the set of packets in OPT's buffer that should be sent by time $t + |Q^\text{GRQ}_t| - 1$; let $z(t) = |X_t|$; $1 \le z(t) \le B$. We define a potential
\begin{equation}
\Phi = 2 \sum_{j \in S[i], i \le z(t)} v_j + \sum_{j \in S[i], i > z(t)} v_j - \sum_{k \in X_t} v_k.
\label{eq:diff}
\end{equation}

Let $V_t$ and $W_t$ denote the values of the algorithms GRQ and OPT gain respectively in step $t$. We will show that at any time, the change of $2 V_t - W_t + \Delta \Phi$ is always non-negative. We will prove the following Equation~\ref{eq:potential} holds all the time, separately for the events of packet arrivals and deliveries. Thus, it yields Theorem~\ref{theorem:sgd}.

\begin{equation}
2 V_t + \Delta (2 \sum_{j \in S[i], i \le z(t)} v_j + \sum_{j \in S[i], i > z(t)} v_j) \ge W_t + \Delta (\sum_{k \in X_t} v_k).
\label{eq:potential}
\end{equation}

Initially, $\Phi$ is $0$. We study cases of packet arrivals and deliveries.

%------------------------------------------------------------------------------

\subsubsection{Packet deliveries.}

$z(t)$ is reduced by $1$ if OPT has a packet to send. The right side of Inequality~\ref{eq:potential} is always $0$. The left side of Inequality~\ref{eq:potential} is $0$ if OPT sends a packet. If OPT sends nothing, the left side of Inequality~\ref{eq:potential} is strict positive when GRQ sends a packet or $0$ if GRQ sends nothing.

%------------------------------------------------------------------------------

\subsubsection{Packet arrivals.}

Consider a new arrival $p$. $V_t$ and $W_t$ are $0$. If both algorithms reject $p$, $\Phi$ does not change at all. We consider the cases that at least one algorithm accepts $p$. We will show that $\Delta \Phi$ never goes to negative.

\begin{enumerate}
\item Assume OPT rejects $p$. (GRQ accepts $p$.)

The change of the third term of $\Phi$ of Inequality~\ref{eq:diff} is $0$. Note $z(t)$ is unchanged. If $p$ is accepted, it has more value than the packet $j$ evicted, if any. Otherwise, GRQ will keep $j$ instead of $p$ in the buffer. Both the first term and the second term will not decrease.

\item Assume OPT accepts $p$.

Assume $p$ is rejected by GRQ. Any packet stored in the buffer slot $S[i]$ with $1 \le i \le \max\{B, \ d_p - t + 1\}$ has a value $\ge v_p$. Otherwise, GRQ will use $p$ to replace that less-value packet. If $p$ is sent by OPT by time $t + |Q^\text{GRQ}_t| - 1$, the change of the third term of Inequality~\ref{eq:diff} is increased by $v_p$. We claim that $z(t)$ is increased by $1$ and before $p$ arrives, there is at least one packet in GRQ's buffer in a buffer position beyond $z(t)$ with a value $\ge v_p$. If not, either OPT reject $p$ or GRQ accepts $p$. We increase $z(t)$ by $1$ and the first term of $\Phi$ of Inequality~\ref{eq:diff} is increased by at least $v_p$. The second term of $\Phi$ of Inequality~\ref{eq:diff} is not changed. If $p$ is sent by OPT later than $t + |Q^\text{GRQ}_t| - 1$, nothing is changed for Inequality~\ref{eq:diff}.

Assume $p$ is accepted by GRQ. Any packet in GRQ's buffer in a buffer slot $S[i]$ with $1 \le i \le \max\{B, \ d_p - t + 1\}$ has a not-less-value packet than the one it previously stored. We apply the same analysis as above and conclude that Inequality~\ref{eq:potential} holds.
\end{enumerate}

%------------------------------------------------------------------------------

Based on our case study above, Theorem~\ref{theorem:sgd} is proved. $\Box$
\end{proof}

%------------------------------------------------------------------------------

\bibliographystyle{plain}
\bibliography{buffer}

%------------------------------------------------------------------------------

\end{document}